\newtheorem{theorem}{Theorem}
\newtheorem{lemma}{Lemma}
\title{A Proof of the Schr{\"o}der-Bernstein Theorem in ACL2}
\author{Grant Jurgensen
\institute{Kestrel Institute\\ Palo Alto, California}
\email{grant@kestrel.edu}
}
\newcommand{\titlerunning}{A Proof of the Schr{\"o}der-Bernstein Theorem in ACL2}
\newcommand{\authorrunning}{G. Jurgensen}
\begin{document}
\maketitle

\newcommand{\lisp}[1]{\mintinline{acl2-lexer.py:ACL2Lexer -x}{#1}}

\begin{abstract}
The Schr{\"o}der-Bernstein theorem states that, for any two sets $P$ and
$Q$, if there exists an injection from $P$ to $Q$ and an injection
from $Q$ to $P$, then there must exist a bijection between the two
sets.
Classically, it follows that the ordering of the cardinal numbers is
antisymmetric.
We describe a formulation and verification of the Schr{\"o}der-Bernstein
theorem in ACL2 following a well-known proof, introducing a theory of
\textit{chains} to define a non-computable witness.
\end{abstract}

\section{Introduction}
\label{sec:intro}
In this paper we present a formulation and verification of the
Schr{\"o}der-Bernstein theorem in ACL2.
To our knowledge, this is the first proof of the
theorem in the Boyer-Moore family of theorem
provers, although it has been verified in a number of other theorem provers,
including
Isabelle~\cite{isabelle-thm},
Rocq (formerly Coq)~\cite{coq-thm},
Lean~\cite{lean-thm},
Metamath~\cite{metamath-thm},
and Mizar~\cite{mizar-thm}.

This paper is organized as follows.
In Section~\ref{sec:informal}, we outline the mathematical background
and the general proof which will serve as the basis for the ACL2
formalization.
In Section~\ref{subsec:formal-setup}, we describe the formulation of
the theorem's premises in ACL2.
In Section~\ref{subsec:formal-inverses}, we describe our approach to
defining function inverses and present a macro to quickly introduce
inverses and their essential theorems.
In Section~\ref{subsec:formal-chains}, we present a theory of
\emph{chains}, mirroring the concept to be defined in the informal proof
sketch.
Finally, Section~\ref{subsec:formal-witness} defines the non-computable
bijective function and summarizes the intermediate lemmas and final
theorems which conclude the proof of the Schr{\"o}der-Bernstein
theorem.

The full proof and surrounding theory can be found in the ACL2
community
books under \break\href{https://github.com/acl2/acl2/tree/master/books/projects/schroeder-bernstein}{\lisp{projects/schroeder-bernstein}}.

\section{The Informal Proof}
\label{sec:informal}
Given two injective functions $f: P \rightarrow Q$ and
$g: Q \rightarrow P$, the Schr\"oder-Bernstein theorem states there
must exist a bijection $h: P \rightarrow Q$.
Before presenting the formalization within ACL2, we begin with a
proof sketch based upon~\cite{george}, which in turn closely follows
Julius K{\"o}nig's original proof~\cite{konig}.

\subsection{A Theory of Chains}

This proof proceeds from a theory of \emph{chains}.
For convenience, let us assume sets $P$ and $Q$ are disjoint
\footnote{To generalize the argument to arbitrary sets, we need only
tag elements reflecting their association with one of the two
sets. Indeed, we employ this strategy in the ACL2 formalization.}
. We define a chain $C \subseteq P \cup Q$ as a set of elements
which are mutually reachable via repeated application of $f$ and $g$,
or their inverses. So the element $p \in P$ is a member of the
following chain.

\[
\{\ldots,\; f^{-1}(g^{-1}(p)),\; g^{-1}(p),\; p,\; f(p),\; g(f(p)),\; \ldots\}
\]

\pagebreak
Similarly, $q \in Q$ belongs to the chain:

\[
\{\ldots,\; g^{-1}(f^{-1}(q)),\; f^{-1}(q),\; q,\; g(q),\; f(g(q)),\; \ldots\}
\]

Every chain falls in one of a number of categories:
\begin{enumerate}
  \item \textbf{Cyclic chains}: After some finite number of steps, the chain
    cycles back to a previous element.
  \item \textbf{Infinite chains}: All acyclic chains are (countably)
    infinite. Infinite chains all extend infinitely in the
    ``rightward'' direction and may be further subdivided into two
    categories:
    \begin{enumerate}
      \item \textbf{Non-stoppers}: Such chains extend infinitely in
        the leftward direction in addition to the rightward direction.
      \item \textbf{Stoppers}: Such chains do \emph{not} extend
        infinitely leftward and may therefore be said to possess an
        \emph{initial} element.
        On such an element, neither $f^{-1}$ nor $g^{-1}$ is defined
        (i.e., the element is not in the image of $f$ or $g$).
    \end{enumerate}
\end{enumerate}

An ordering on chain elements is implied above which follows the order
in which the elements of the two example chains were enumerated.
This simple ordering may be more rigorously defined as the
reflexive-transitive closure of the relation defined by the following
two inference rules.

\[
\inference{
  p \in P
}{
  p \sqsubseteq f(p)
}
\qquad
\inference{
  q \in Q
}{
  q \sqsubseteq g(q)
}
\]

This order is neither symmetric nor antisymmetric in general and is
therefore a preorder.
(On infinite chains, however, the order is antisymmetric and therefore
a partial order. On cyclic chains, it is symmetric and therefore an
equivalence relation.)
Let $chain(x)$ denote the chain to which $x$ belongs. We note that,
for arbitrary $x, y \in P \cup Q$, the equality $chain(x) = chain(y)$
holds if and only if $x \sqsubseteq y$ or $y \sqsubseteq x$.
It follows that the set of chains partition $P \cup Q$.

Note that an initial element is minimal with respect to this ordering.
That is, value $i$ is initial if and only if $x \sqsubseteq i$ implies
$x = i$ for arbitrary $x$.
This definition is equivalent to the one given above.

An initial element may reside either in $P$ or $Q$. We further
subdivide the category of stopper chains, referring to chains with
initial elements in $P$ as ``$P$-stoppers'' and those with initial
elements in $Q$ as ``$Q$-stoppers''.

\begin{lemma}
  \label{lem:inv-unique}
  The initial element of a chain is unique.
\end{lemma}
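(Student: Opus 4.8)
The plan is to obtain the lemma as a short consequence of two facts already in hand: the characterization of an initial element as a $\sqsubseteq$-minimal element of its chain, and the observation that any two elements lying in the same chain are comparable under $\sqsubseteq$ (which is exactly the forward direction of the stated equivalence $chain(x) = chain(y) \iff (x \sqsubseteq y \text{ or } y \sqsubseteq x)$).

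Concretely, I would fix a chain and suppose $i$ and $j$ are both initial elements of it. Since $i$ and $j$ belong to the same chain, $chain(i) = chain(j)$, so the equivalence gives $i \sqsubseteq j$ or $j \sqsubseteq i$. If $i \sqsubseteq j$, then minimality of $j$ — that $x \sqsubseteq j$ implies $x = j$, used with $x = i$ — yields $i = j$; symmetrically, if $j \sqsubseteq i$, minimality of $i$ yields $j = i$. Either way $i = j$, which is the claim. No further case analysis is needed, and the cyclic/stopper/non-stopper taxonomy plays no role here — chains with no initial element simply make the statement vacuous.

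The argument carries essentially no computational content, so there is no substantial obstacle at the informal level; the genuine work sits upstream, in establishing that $\sqsubseteq$ restricts to a total order on (the elements of) each chain, and in checking that the "not in the image of $f$ or $g$" description of an initial element coincides with the "$\sqsubseteq$-minimal" description. Within ACL2 the only real care required is bookkeeping: phrasing the chain-membership hypotheses so that the comparability fact can be instantiated, and supplying the minimality rule with the correct witness. I would expect this lemma to be proved in just a handful of ACL2 events once those supporting results are available.
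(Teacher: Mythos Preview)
Your proposal is correct and matches the paper's own proof essentially line for line: both arguments invoke comparability of same-chain elements under $\sqsubseteq$ and then apply minimality of an initial element to conclude equality. The only cosmetic difference is that the paper dispatches the two cases by a ``without loss of generality'' whereas you spell out both symmetric branches explicitly.
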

\begin{proof}
  This fact follows immediately from the minimality of initial
  elements.
  Let $x$ and $y$ be initial within the same chain.
  As noted above, we have $x \sqsubseteq y$ or $y \sqsubseteq x$ since
  the two share a chain. Without loss of generality, assume
  $x \sqsubseteq y$. Then by the minimality of initial element $y$, we
  have $x = y$.
\end{proof}

\subsection{Definition and Proof of the Bijection}

With the above theory of chains established, we are able to define our
bijection.
Let $stoppers_Q$ denote the set of $Q$-stoppers.
Then we define our proposed bijection $h$:

\[
h(p) =
  \begin{cases}
    g^{-1}(p) & \text{ if } chain(p) \in stoppers_Q \\
    f(p)     & \text{ otherwise}
  \end{cases}
\]

The decision to use this particular definition of $h$ is, in part,
arbitrary. When $chain(p)$ is cyclic or a non-stopper, either $f$ or
$g^{-1}$ are possible definitions. We choose to bias toward the use of
$f$, which will be more convenient in the subsequent ACL2
formalization.

We begin with a few prerequisite lemmas before proceeding to establish
bijectivity.

\begin{lemma}
  \label{lem:g-inv-defined}
  Let $p \in P$ and $chain(p) \in stoppers_Q$. Then $p$ is in the image of $g$.
\end{lemma}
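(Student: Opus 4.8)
\subsection*{Proof plan}

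The plan is to argue by contradiction, using the characterization of initial elements as exactly those values lying in the image of neither $f$ nor $g$, together with the uniqueness of the initial element (Lemma~\ref{lem:inv-unique}).

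Suppose, toward a contradiction, that $p$ is \emph{not} in the image of $g$. The first step is to observe that $p$ is then also not in the image of $f$: the image of $f$ is contained in $Q$, which is disjoint from $P$, whereas $p \in P$. Hence $p$ satisfies the defining condition of an initial element. Now, since $chain(p) \in stoppers_Q$, that chain possesses an initial element $i \in Q$, and by Lemma~\ref{lem:inv-unique} the initial element of $chain(p)$ is unique. But $p$ is itself an initial element of $chain(p)$, so $p = i \in Q$, contradicting $p \in P$ and the disjointness of $P$ and $Q$. Therefore $p$ must be in the image of $g$.

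I expect the only real subtlety to be the bookkeeping that replaces the convenient disjointness assumption: in the ACL2 development, elements of $P$ and $Q$ are tagged, so the step ``the image of $f$ lies in $Q$'' becomes a claim about the tags carried by $f$'s outputs, and the final contradiction becomes a mismatch of tags. A secondary point worth spelling out is the passage between the two formulations of ``initial'' — minimality with respect to $\sqsubseteq$ versus membership in the image of neither $f$ nor $g$ — but the equivalence of these has already been recorded above. An alternative, more hands-on route would instead take the nontrivial $\sqsubseteq$-path from the initial element $i \in Q$ of $chain(p)$ down to $p$ and inspect its final step: since $f$ sends $P$ into $Q$, that last step cannot be an application of $f$, hence it is an application of $g$, exhibiting $p$ as $g(q)$ for some $q \in Q$. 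I would keep the contradiction-based argument as the primary one, since it avoids reasoning directly about the reflexive-transitive closure.
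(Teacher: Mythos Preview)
Your argument is correct and matches the paper's own proof: both use the uniqueness of the initial element (Lemma~\ref{lem:inv-unique}) together with the fact that the initial element of a $Q$-stopper lies in $Q$ to conclude that $p$ cannot be initial, whence $p$ is in the image of $g$. The only cosmetic difference is that you phrase it as an explicit contradiction and spell out that $p$ is not in the image of $f$, whereas the paper states directly that $p\notin Q$ implies $p$ is not initial and hence lies in the image of $g$.
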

\begin{proof}
  By the definition of a $Q$-stopper, the initial element of
  $chain(p)$ resides in $Q$.
  Since the initial element is unique (Lemma~\ref{lem:inv-unique}) and
  $p \notin Q$, $p$ must not be initial.
  Therefore, it is by definition in the image of $g$.
\end{proof}

\begin{lemma}
  \label{lem:f-inv-defined}
  Let $q \in Q$ and $chain(q) \notin stoppers_Q$. Then $q$ is in the image of $f$.
\end{lemma}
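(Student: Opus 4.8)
The plan is to prove the contrapositive: assuming $q \in Q$ is \emph{not} in the image of $f$, I would show that $chain(q) \in stoppers_Q$, contradicting the hypothesis. This mirrors the proof of Lemma~\ref{lem:g-inv-defined} run in the opposite direction --- there we passed from membership in a $Q$-stopper to membership in the image of $g$; here we pass from failure of the image condition to the $Q$-stopper classification.

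First I would observe that, since $q \in Q$ and $g$ maps $Q$ into $P$, the element $q$ cannot lie in the image of $g$, by disjointness of $P$ and $Q$. Combined with the hypothesis that $q$ is not in the image of $f$, this means that neither $f^{-1}$ nor $g^{-1}$ is defined on $q$; hence, by the characterization of initial elements recalled above, $q$ is an initial element of $chain(q)$.

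Next I would note that the existence of an initial element forces $chain(q)$ to be a stopper: a cyclic chain has no minimal element, and a non-stopper extends infinitely leftward and so likewise has no minimal element, so only stopper chains possess an initial element. Since $q$ is that initial element and $q \in Q$, the chain is by definition a $Q$-stopper, i.e.\ $chain(q) \in stoppers_Q$ --- contradicting the hypothesis. Therefore $q$ must lie in the image of $f$.

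I expect the only delicate point to be the bookkeeping around the two equivalent definitions of ``initial'' (minimality in the $\sqsubseteq$ preorder versus not lying in the image of $f$ or $g$) and remembering to invoke disjointness to rule out $q$ being in the image of $g$; everything else is immediate from the classification of chains.
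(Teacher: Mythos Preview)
Your proof is correct and is essentially the contrapositive of the paper's argument: the paper shows directly that $q$ cannot be initial (case-splitting on whether $chain(q)$ has an initial element at all, and using $q \notin P$ when it does), then concludes $q$ lies in the image of $f$; you instead assume $q$ is not in the image of $f$, deduce that $q$ is initial, and obtain the $Q$-stopper contradiction. The content is the same, and your version arguably reads more cleanly by avoiding the case split.
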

\begin{proof}
  If $chain(q)$ has an initial element, then the initial element must
  be in $P$. Since $q \notin P$, it is not initial. If $chain(q)$ does
  not have an initial element, then clearly $q$ is again not initial.
  By definition then, $q$ is in the image of $f$.
\end{proof}

These lemmas establish when we may safely take the inverse of $f$ and
$g$. Lemma~\ref{lem:g-inv-defined} in particular shows that the first
case of our bijection $h$ is well-defined.

\begin{lemma}
  \label{lem:chain-h}
  Let $p \in P$. Then $chain(h(p)) = chain(p)$.
\end{lemma}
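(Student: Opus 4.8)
The plan is to proceed by a straightforward case analysis mirroring the two-case definition of $h$, reducing the goal in each case to the earlier characterization that $chain(x) = chain(y)$ holds precisely when $x \sqsubseteq y$ or $y \sqsubseteq x$.

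First I would consider the case $chain(p) \notin stoppers_Q$, so that $h(p) = f(p)$. Since $p \in P$, the first inference rule for $\sqsubseteq$ immediately gives $p \sqsubseteq f(p)$, and hence $chain(p) = chain(f(p)) = chain(h(p))$ by the membership characterization.

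Next I would treat the case $chain(p) \in stoppers_Q$, so that $h(p) = g^{-1}(p)$. The first thing to verify is that $g^{-1}(p)$ is meaningful at all, and this is exactly the content of Lemma~\ref{lem:g-inv-defined}, which tells us $p$ lies in the image of $g$. Writing $q = g^{-1}(p) \in Q$ so that $g(q) = p$, the second inference rule yields $q \sqsubseteq g(q)$, i.e.\ $g^{-1}(p) \sqsubseteq p$. Invoking the chain-membership characterization once more gives $chain(g^{-1}(p)) = chain(p)$, which is the desired equality.

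I do not anticipate any real conceptual obstacle; the lemma is essentially immediate once Lemma~\ref{lem:g-inv-defined} has secured well-definedness of the inverse in the stopper case. In the ACL2 development the only friction should be bookkeeping — discharging the image-membership hypotheses required to fire the inverse rewrite rules, and threading through the disjoint-tagging of $P$ and $Q$ — rather than anything mathematically substantive.
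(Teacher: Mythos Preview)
Your proposal is correct and follows essentially the same approach as the paper: a two-case split on the definition of $h$, reducing each case to the fact that $p$ shares a chain with $f(p)$ and with $g^{-1}(p)$ (when the latter is defined). You spell out more detail than the paper does---explicitly invoking Lemma~\ref{lem:g-inv-defined} for well-definedness and the $\sqsubseteq$ characterization of chain equality---but the underlying argument is the same.
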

\begin{proof}
  Either $h(p) = g^{-1}(p)$ or $h(p) = f(p)$. By definition, $p$ is in
  the same chain as $f(p)$ as well as $g^{-1}(p)$, if it is defined.
\end{proof}

\begin{lemma}[Injectivity of $h$]
  \label{lem:inj}
  Let $p_0, p_1 \in P$, where $h(p_o) = h(p_1)$. Then $p_0 = p_1$.
\end{lemma}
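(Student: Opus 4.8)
The plan is to first reduce to the case where $p_0$ and $p_1$ lie in a common chain, and then split on whether that chain is a $Q$-stopper, so that both applications of $h$ go through the same branch of its definition. Concretely, I would begin by applying Lemma~\ref{lem:chain-h} to each of $p_0$ and $p_1$, obtaining $chain(p_0) = chain(h(p_0))$ and $chain(p_1) = chain(h(p_1))$. Since $h(p_0) = h(p_1)$ by hypothesis, these combine to give $chain(p_0) = chain(p_1)$. In particular $chain(p_0) \in stoppers_Q$ if and only if $chain(p_1) \in stoppers_Q$, so the definition of $h$ treats $p_0$ and $p_1$ uniformly.

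Next I would carry out the case split. If $chain(p_0) = chain(p_1) \in stoppers_Q$, then $h(p_0) = g^{-1}(p_0)$ and $h(p_1) = g^{-1}(p_1)$, and Lemma~\ref{lem:g-inv-defined} tells us that both $p_0$ and $p_1$ lie in the image of $g$. Applying $g$ to the equality $g^{-1}(p_0) = g^{-1}(p_1)$ and using $g(g^{-1}(p_i)) = p_i$ for $p_i$ in the image of $g$ then yields $p_0 = p_1$. Otherwise $chain(p_0) = chain(p_1) \notin stoppers_Q$, so $h(p_0) = f(p_0)$ and $h(p_1) = f(p_1)$; hence $f(p_0) = f(p_1)$, and injectivity of $f$ gives $p_0 = p_1$ directly.

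I expect the only real subtlety to be the first step — establishing that $p_0$ and $p_1$ share a chain and therefore fall under the same branch of $h$. Without this observation one would have to rule out a ``mixed'' situation in which, say, $g^{-1}(p_0) = f(p_1)$ with $chain(p_0)$ a $Q$-stopper but $chain(p_1)$ not; Lemma~\ref{lem:chain-h} dispatches exactly this, since it forces the common value $h(p_0) = h(p_1)$ to inhabit a chain that is simultaneously a $Q$-stopper and not. In the ACL2 formalization, the remaining friction is bookkeeping: arranging that $g^{-1}$ is applied only where Lemma~\ref{lem:g-inv-defined} guarantees it acts as a genuine inverse, and threading the injectivity hypotheses on $f$ and $g$ through the case analysis.
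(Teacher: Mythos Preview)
Your proposal is correct and follows essentially the same argument as the paper: use Lemma~\ref{lem:chain-h} together with $h(p_0)=h(p_1)$ to force $p_0$ and $p_1$ into the same branch of $h$, then conclude by applying $g$ (with Lemma~\ref{lem:g-inv-defined} ensuring $g^{-1}$ is a genuine inverse) in the $Q$-stopper case and by injectivity of $f$ otherwise. The only cosmetic difference is that the paper case-splits on whether $h(p_0)$ lies in a $Q$-stopper and then pulls back to $p_0,p_1$ via Lemma~\ref{lem:chain-h}, whereas you first establish $chain(p_0)=chain(p_1)$ and then split on that common chain; the content is the same.
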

\begin{proof}
  \leavevmode\\
  Case 1: $h(p_0)$ is in a $Q$-stopper.\\
  \indent By equality, $h(p_1)$ is also in a $Q$-stopper. By
  Lemma~\ref{lem:chain-h}, so are $p_0$ and $p_1$.
  By definition, we have $h(p_0) = g^{-1}(p_0)$ and $h(p_1) = g^{-1}(p_1)$.
  From $h(p_0) = h(p_1)$, we get $g^{-1}(p_0) = g^{-1}(p_1)$.
  Applying $g$ yields $p_0 = p_1$.\\
  Case 2: $h(p_0)$ is not in a $Q$-stopper.\\
  \indent $h(p_1)$, $p_0$, and $p_1$ are also not in $Q$-stoppers.
  By definition, we then have $h(p_0) = f(p_0)$ and $h(p_1) = f(p_1)$.
  From $h(p_0) = h(p_1)$, we get $f(p_0) = f(p_1)$.
  By injectivity of $f$, we have $p_0 = p_1$.
\end{proof}

\begin{lemma}[Surjectivity of $h$]
  \label{lem:sur}
  Let $q \in Q$. Then there exists $p \in P$ such that $h(p) = q$.
\end{lemma}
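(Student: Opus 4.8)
The plan is to exhibit an explicit preimage of $q$ by splitting on whether $chain(q)$ is a $Q$-stopper, mirroring the two cases in the definition of $h$.

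First, suppose $chain(q) \in stoppers_Q$. I would take $p = g(q)$, which lies in $P$ since $g : Q \rightarrow P$. Because $q \sqsubseteq g(q)$ by the second inference rule, $q$ and $g(q)$ share a chain, so $chain(p) = chain(g(q)) = chain(q) \in stoppers_Q$. Hence the first clause of the definition of $h$ applies, and $h(p) = g^{-1}(p) = g^{-1}(g(q)) = q$, where the last equality uses the injectivity of $g$ (so that $g^{-1}$ is a genuine left inverse on the image of $g$).

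Second, suppose $chain(q) \notin stoppers_Q$. By Lemma~\ref{lem:f-inv-defined}, $q$ is in the image of $f$, so there is some $p \in P$ with $f(p) = q$; concretely, $p = f^{-1}(q)$. Since $p \sqsubseteq f(p) = q$ by the first inference rule, $chain(p) = chain(q) \notin stoppers_Q$, so the second clause of the definition gives $h(p) = f(p) = q$. In both cases we have produced $p \in P$ with $h(p) = q$.

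The only real content is the appeal to Lemma~\ref{lem:f-inv-defined} in the second case; without it the choice $p = f^{-1}(q)$ would not be justified. The other ingredient, used in both cases, is simply that applying $f$ or $g$ keeps one within the same chain, so the chain-membership test governing the definition of $h$ selects exactly the clause we need. I expect the main subtlety in the ACL2 formalization (as opposed to this informal argument) to be packaging the two witnesses $g(q)$ and $f^{-1}(q)$ into a single non-computable choice function and discharging the attendant well-definedness obligations, rather than any difficulty in the underlying mathematics.
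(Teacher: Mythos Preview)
Your proof is correct and follows essentially the same approach as the paper: split on whether $chain(q)$ is a $Q$-stopper, take $p = g(q)$ in the first case and $p = f^{-1}(q)$ (justified by Lemma~\ref{lem:f-inv-defined}) in the second, and observe in each case that $p$ lies in the same chain as $q$ so that the matching clause of $h$ fires. If anything, you spell out the chain-preservation step slightly more explicitly than the paper does.
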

\begin{proof}
  \leavevmode\\
  Case 1: $q$ is in a $Q$-stopper. \\
  \indent Then $g(q)$ is also in a $Q$-stopper by definition.
  Let $p = g(q)$. Then:
  \begin{align*}
    h(p) &= h(g(q))\\
         &= g^{-1}(g(q))\\
         &= q
  \end{align*}
  Case 2: $q$ is not in a $Q$-stopper.\\
  \indent By Lemma~\ref{lem:f-inv-defined}, $f^{-1}(q)$ is well-defined.
  Since $q$ is not in a $Q$-stopper, neither is $f^{-1}(q)$.
  Let $p = f^{-1}(q)$. Then:
  \begin{align*}
    h(p) &= h(f^{-1}(q))\\
         &= f(f^{-1}(q))\\
         &= q
  \end{align*}
\end{proof}

\begin{theorem}[Schr{\"o}der-Bernstein]
  $h$ is bijective.
\end{theorem}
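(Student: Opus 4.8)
The plan is straightforward: a function is bijective exactly when it is both injective and surjective, and both properties are already in hand. First I would invoke Lemma~\ref{lem:inj}, which gives injectivity of $h$, and Lemma~\ref{lem:sur}, which gives surjectivity onto $Q$; combining the two immediately yields the claim. Before concluding, I would pause on a well-definedness check: the first branch of $h$, which returns $g^{-1}(p)$ when $chain(p)$ is a $Q$-stopper, is meaningful precisely because Lemma~\ref{lem:g-inv-defined} guarantees such a $p$ lies in the image of $g$. Without that guarantee the case split defining $h$ would not yield a total function $P \rightarrow Q$ at all, so this lemma is quietly doing the work of making the statement even type-correct.

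I would also record that the codomain claim is legitimate. By Lemma~\ref{lem:chain-h} we have $chain(h(p)) = chain(p)$, so $h$ sends a $P$-element of a $Q$-stopper back into that same $Q$-stopper and a $P$-element of any other chain into that same chain; in the first branch the output is an inverse image under $g$ and in the second it is a value of $f$, so in either case $h(p) \in Q$. With the domain, codomain, injectivity, and surjectivity all pinned down, ``$h$ is bijective'' follows with no further argument.

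Since the genuine mathematical content was front-loaded into the earlier lemmas, I do not expect a real obstacle in this final step itself; the hard part was the surjectivity argument, whose second case rests on Lemma~\ref{lem:f-inv-defined} together with the non-obvious observation that $f^{-1}(q)$ stays outside every $Q$-stopper. In the ACL2 rendition, the one delicate point I would anticipate is stating and discharging this theorem against an $h$ that is non-computable — introduced via a choice principle over the chain classification — so that the exported bijectivity results are phrased (e.g., with explicit inverse witnesses for $h$) in a form usable downstream without reopening the chain theory.
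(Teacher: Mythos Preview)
Your proposal is correct and follows essentially the same approach as the paper: the theorem is concluded directly from Lemma~\ref{lem:inj} (injectivity) and Lemma~\ref{lem:sur} (surjectivity). The additional well-definedness and codomain checks you spell out are accurate elaborations, but the paper treats them as already handled by the earlier lemmas and records the final step as a one-line appeal to those two results.
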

\begin{proof}
  By Lemma~\ref{lem:inj} and Lemma~\ref{lem:sur}.
\end{proof}

\section{ACL2 Formalization}
\label{sec:formal}
\subsection{Setup}
\label{subsec:formal-setup}

To verify the Schr\"oder-Bernstein theorem within ACL2, we closely
follow the informal proof outlined in the previous section.
We begin by introducing our ``sets'' as well as their injections.
Since ACL2 is first-order
\footnote{ACL2 offers limited second-order functionality through
\lisp{apply$}~\cite{apply-dollar}.
However, \lisp{apply$} only operates on objects corresponding to a
proper subset of ACL2's functions syntactically determined to be
``tame.''
We might also have used SOFT~\cite{soft-acl2} to simulate second-order
functions.
}
, we do not explicitly quantify over either.
Instead, we introduce arbitrary predicates (representing the sets) and
the injections between them via an \lisp{encapsulate} event
\footnote{This ACL2 code snippet, as well as many of the following,
are modified slightly for brevity.
In particular, we elide proof hints, \lisp{xargs}, and returns specifications.}
.

\begin{minted}{acl2-lexer.py:ACL2Lexer -x}
(encapsulate
  (((f *) => *)
   ((g *) => *)
   ((p *) => *)
   ((q *) => *))

  (local (define p (x) (declare (ignore x)) t))
  (local (define q (x) (declare (ignore x)) t))

  (local (define f (x) x))
  (local (define g (x) x))

  (defrule q-of-f-when-p
    (implies (p x)
             (q (f x))))

  (defrule injectivity-of-f
    (implies (and (p x)
                  (p y)
                  (equal (f x) (f y)))
             (equal x y))
    :rule-classes nil)

  (defrule p-of-g-when-q
    (implies (q x)
             (p (g x))))
\end{minted}
\pagebreak
\begin{minted}{acl2-lexer.py:ACL2Lexer -x}
  (defrule injectivity-of-g
    (implies (and (q x)
                  (q y)
                  (equal (g x) (g y)))
             (equal x y))
    :rule-classes nil))
\end{minted}

Functions \lisp{p} and \lisp{q} correspond to the sets $P$ and
$Q$ and are totally unconstrained.
Although we interpret them as predicates, there is no need to
constrain them to be strictly boolean-valued.
Similarly, the ACL2 functions \lisp{f} and \lisp{g} correspond to the
mathematical functions $f$ and $g$ in our informal proof.
For these functions, we introduce two constraints each.
First, since ACL2 functions are total, we require a theorem confirming
the output of the function is in the codomain given that the input is
in the intended domain (theorems \lisp{q-of-f-when-p} and
\lisp{p-of-g-when-q}).
Second, we establish the function's injectivity within said domain
(theorems \lisp{injectivity-of-f} and \lisp{injectivity-of-g}).
In general, subsequent theorems concerning \lisp{f} and \lisp{g} only
characterize the functions applied to their respective domains.

\subsection{Function Inverses}
\label{subsec:formal-inverses}

Before we can define our bijective witness, we must define a variety
of auxiliaries, starting with our function inverses.
Of course, the inverses of arbitrary functions are not computable.
So, we must define our inverses via \lisp{defchoose} events.
To quickly introduce such inverses and their essential theorems, we
define a macro, \lisp{definverse}.
As an example of what \lisp{definverse} produces, the declaration
\lisp{(definverse f :domain p :codomain q}) emits the following
definitions:

\begin{minted}{acl2-lexer.py:ACL2Lexer -x}
(define is-f-inverse (inv x)
  (and (p inv)
       (q x)
       (equal (f inv) x)))

(defchoose f-inverse (inv) (x)
  (is-f-inverse inv x))

(define in-f-imagep (x)
  (is-f-inverse (f-inverse x) x))
\end{minted}

While $f^{-1}$ is only defined on the image of $f$, the ACL2 function
\lisp{f-inverse} is total.
However, recall that a function introduced by \lisp{defchoose} will be
unconstrained when the predicate on which it is defined is
unsatisfiable.
So the value of \lisp{(f-inverse x)} is unspecified when \lisp{x} is
outside the image of \lisp{f}.
Thus, we are only able to characterize \lisp{(f-inverse x)} when
\lisp{(in-f-imagep x)} can be established.

In addition to the definitional events above, a number of theorems are
also generated pertaining to the domain and codomain of the inverse
function as well as the identity of the left and right compositions of
the original function with its inverse.
From the same example, we have:

\begin{minted}{acl2-lexer.py:ACL2Lexer -x}
(defrule in-f-imagep-of-f-when-p
  (implies (p x)
           (in-f-imagep (f x))))
\end{minted}
\pagebreak
\begin{minted}{acl2-lexer.py:ACL2Lexer -x}
(defrule p-of-f-inverse-when-in-f-imagep
  (implies (in-f-imagep x)
           (p (f-inverse x))))

;; Left inverse
(defrule f-inverse-of-f-when-p
  (implies (p x)
           (equal (f-inverse (f x))
                  x)))
;; Right inverse
(defrule f-of-f-inverse-when-in-f-imagep
  (implies (in-f-imagep x)
           (equal (f (f-inverse x))
                  x)))
\end{minted}

We define the inverses of both \lisp{f} and \lisp{g} with this
\lisp{definverse} macro.

\subsection{The Theory of Chains}
\label{subsec:formal-chains}

To define chains, we begin by defining chain elements, recognized by
the \lisp{chain-elemp} predicate. A chain element is represented as a
tagged value residing in either \lisp{p} or \lisp{q}, depending on the
tag.
This tagging is required to avoid the assumption of disjointedness
present in the informal proof.
We refer to a chain element's tag as its \emph{polarity}.
The ACL2 predicate \lisp{(polarity x)} holds when chain element
\lisp{x} belongs to \lisp{p}. Otherwise, a valid chain element belongs
to \lisp{q}.

\begin{minted}{acl2-lexer.py:ACL2Lexer -x}
(define chain-elemp (x)
  (and (consp x)
       (booleanp (car x))
       (if (car x)
           (and (p (cdr x)) t)
         (and (q (cdr x)) t))))

;; Construct a chain element
(define chain-elem (polarity val)
  (cons (and polarity t) val))

;; Get the polarity of a chain element
(define polarity ((elem consp))
  (and (car elem)
       t))

;; Get the value of a chain element
(define val ((elem consp))
  (cdr elem))
\end{minted}

Since chains may be infinite, we cannot construct them explicitly by
enumerating their elements.
Instead, we define a non-computable equivalence,
\lisp{chain=}, which relates chain elements belonging to the same
chain
\footnote{It would be straightforward to identify chains with some
canonical element of the chain, chosen arbitrarily via a
\lisp{defchoose} with the \lisp{:strengthen t} keyword argument.
This step is, however, unnecessary for our proof of the
Schr{\"o}der-Bernstein theorem.}
.
\begin{minted}{acl2-lexer.py:ACL2Lexer -x}
(define chain= ((x consp) (y consp))
  (if (and (chain-elemp x)
           (chain-elemp y))
      (or (chain<= x y)
          (chain<= y x))
    (equal x y)))
\end{minted}

When \lisp{x} and \lisp{y} are not chain elements, we fall back to
regular equality to ensure that the function is an equivalence
relation for all inputs.
The \lisp{chain<=} function, which appears in our definition of
\lisp{chain=}, corresponds to the ordering relation $\sqsubseteq$
discussed in Section~\ref{sec:informal}.
Formally, we define it using the following existential quantification.

\begin{minted}{acl2-lexer.py:ACL2Lexer -x}
(define-sk chain<= ((x consp) y)
  (exists n
    (equal (chain-steps x (nfix n))
           y)))
\end{minted}

Here, \lisp{(chain-steps x n)} yields the chain element obtained
from taking \lisp{n} steps ``right'' along the chain (applying either
\lisp{f} or \lisp{g}, depending on the polarity), starting from the
element \lisp{x}. We define it as follows.

\begin{minted}{acl2-lexer.py:ACL2Lexer -x}
(define chain-step ((elem consp))
  (let ((polarity (polarity elem)))
    (chain-elem (not polarity)
                (if polarity
                    (f (val elem))
                  (g (val elem))))))

(define chain-steps ((elem consp) (steps natp))
  (if (zp steps)
      elem
    (chain-steps (chain-step elem) (- steps 1))))
\end{minted}

Beyond comparing whether two elements reside in the same chain, we
must also characterize initial chain elements and $Q$-stoppers.

\begin{minted}{acl2-lexer.py:ACL2Lexer -x}
(define initialp ((elem consp))
  (if (polarity elem)
      (not (in-g-imagep (val elem)))
    (not (in-f-imagep (val elem)))))

(define initial-wrt ((initial consp) (elem consp))
  (and (chain-elemp initial)
       (initialp initial)
       (chain<= initial elem)))
\end{minted}
\pagebreak
\begin{minted}{acl2-lexer.py:ACL2Lexer -x}
(defchoose get-initial (initial) (elem)
  (initial-wrt initial elem))

(define exists-initial ((elem consp))
  (initial-wrt (get-initial elem) elem))
\end{minted}

In Section~\ref{sec:informal}, we provided two equivalent definitions
of initial elements. In the ACL2 formalization, we opt for the first
definition, based on membership within the images of $f$ and $g$
(i.e., the existence of an inverse).
The alternative definition, based on the minimality of initial
elements, might have been employed via a Skolem function like so:

\begin{minted}{acl2-lexer.py:ACL2Lexer -x}
(define-sk initialp-alt ((elem consp))
  (forall x
    (implies (and (chain-elemp x)
                  (chain<= x elem))
             (equal elem x)))))
\end{minted}

Such a definition is appealing in its conceptual simplicity.
However, the introduction of yet another quantifier and Skolem
function beyond those already required would further burden the proofs
with necessary \lisp{:use} hints.
Instead, we prefer to adopt the original definition and prove the
minimality of initial elements as a consequence:

\begin{minted}{acl2-lexer.py:ACL2Lexer -x}
(defrule chain<=-of-arg1-and-initial
  (implies (and (chain-elem-p x)
                (initial-p initial))
           (equal (chain<= x initial)
                  (equal x initial)))
\end{minted}

Similarly, \lisp{initial-wrt} (pronounced ``initial with respect
to'') might have been defined in terms of \lisp{chain=}.
But, as implied by the above, \lisp{(chain<= initial x)} and
\lisp{(chain= initial x)} are equivalent when \lisp{initial} is
initial. Therefore, we choose the stronger definition.

Finally, we may define membership of a chain element within a
$Q$-stopper.

\begin{minted}{acl2-lexer.py:ACL2Lexer -x}
(define in-q-stopper ((elem consp))
  (and (exists-initial elem)
       (not (polarity (get-initial elem)))))
\end{minted}

\subsection{The Bijective Witness}
\label{subsec:formal-witness}

Our bijective witness is now easily defined, following the piecewise
definition $h$ from the informal proof.

\begin{minted}{acl2-lexer.py:ACL2Lexer -x}
(define sb-witness (x)
  (if (in-q-stopper (chain-elem t x))
      (g-inverse x)
    (f x)))
\end{minted}

We prove key theorems regarding when a chain element is necessarily in
the image of \lisp{f} or \lisp{g}, mirroring
Lemma~\ref{lem:g-inv-defined} and Lemma~\ref{lem:f-inv-defined} of the
proof sketch.

\begin{minted}{acl2-lexer.py:ACL2Lexer -x}
(defrule in-g-imagep-when-in-q-stopper
  (implies (and (in-q-stopper elem)
                (polarity elem))
           (in-g-imagep (val elem))))
\end{minted}
\pagebreak
\begin{minted}{acl2-lexer.py:ACL2Lexer -x}
(defrule in-f-imagep-when-not-in-q-stopper
  (implies (and (chain-elemp elem)
                (not (in-q-stopper elem))
                (not (polarity elem)))
           (in-f-imagep (val elem))))
\end{minted}

Similarly, we prove the analogue of Lemma~\ref{lem:chain-h}, which
shows \lisp{sb-witness} preserves chain membership.

\begin{minted}{acl2-lexer.py:ACL2Lexer -x}
(defrule chain=-of-sb-witness
  (implies (p x)
           (chain= (chain-elem t x)
                   (chain-elem nil (sb-witness x)))))
\end{minted}

Finally, we prove the following three theorems which establish the bijectivity of \lisp{sb-witness} and therefore conclude our verification of the Schr{\"o}der-Bernstein theorem.

\begin{minted}{acl2-lexer.py:ACL2Lexer -x}
(defrule q-of-sb-witness-when-p
  (implies (p x)
           (q (sb-witness x))))

(defrule injectivity-of-sb-witness
  (implies (and (p x)
                (p y)
                (equal (sb-witness x)
                       (sb-witness y)))
           (equal x y)))

(define-sk exists-sb-inverse (x)
  (exists inv
    (and (p inv)
         (equal (sb-witness inv) x))))

(defrule surjectivity-of-sb-witness
  (implies (q x)
           (exists-sb-inverse x)))
\end{minted}

\section{Conclusion}
\label{sec:conclusion}
We have presented a formulation and verification of the
Schr{\"o}der-Bernstein theorem within ACL2.
We started with an informal illustration of one of the theorem's
well-known proofs.
We then demonstrated how this proof mapped into the logic of ACL2.
We introduced our generic ``sets'' via predicates, locally
encapsulated with their two generic injections.
We then defined function inverses as well as our theory of chains
using Skolem functions.
For the former, we introduced the \lisp{definverse} macro to quickly
define function inverses.
Finally, we presented the bijective witness,
some key intermediate lemmas corresponding to steps in the informal
proof,
and then the three theorems which together establish bijectivity
within the domain,
thereby completing the proof of the Schr{\"o}der-Bernstein theorem.

\newpage
\nocite{*}
\bibliographystyle{eptcs}
\bibliography{bib}

\end{document}